\newcommand{\CC}{{\mathbb C}}
\newcommand{\GG}{{\mathbb G}}
\newcommand{\PP}{{\mathbb P}}
\newcommand{\QQ}{{\mathbb Q}}
\newcommand{\ZZ}{{\mathbb Z}}
\newcommand{\ii}{\mathrm{i}}
\newcommand{\dR}{\mathfrak{dr}}
\newcommand{\dd}{\mathrm{d}}
\font \rus= wncyr10
\newcommand{\sha}{\, \hbox{\rus x} \,}
\theoremstyle{plain}
\newtheorem{thm}{Theorem}
\newtheorem{lem}[thm]{Lemma}
\newtheorem{con}[thm]{Conjecture}
\newtheorem{prop}[thm]{Proposition}
\newtheorem{remark}[thm]{Remark}
\newtheorem{defn}[thm]{Definition}
\newtheorem{ex}[thm]{Example}
\title{The Galois coaction on the electron anomalous magnetic moment}
\author{Oliver Schnetz}
\address{Department Mathematik\\
Cauerstra{\ss}e 11\\
91058 Erlangen, Germany}
\email{schnetz@mi.uni-erlangen.de}
\begin{document}
\begin{abstract}
Recently S. Laporta published a partial result on the fourth order QED contribution to the electron anomalous magnetic moment $g-2$ \cite{Laporta}.
This result contains explicit polylogarithmic parts with fourth and sixth roots of unity. In this note we convert
Laporta's result into the motivic `$f$ alphabet'. This provides a much shorter expression which makes the Galois structure visible.
We conjecture the $\QQ$ vector spaces of Galois conjugates of the QED $g-2$ up to weight four.
The conversion into the $f$ alphabet relies on a conjecture by D. Broadhurst that iterated integrals in
certain Lyndon words provide an algebra basis for the extension of multiple zeta values (MZVs) by sixth roots of unity.
We prove this conjecture in the motivic setup.
\end{abstract}

\maketitle
\section{Introduction}
In \cite{Laporta} S. Laporta published a partial result (and a high precision numerical evaluation) of the fourth order QED contribution to the electron anomalous magnetic moment.
The result consists of a polylogarithmic part, an elliptic part, and a yet unknown part. The polylogarithmic part is explicitly given in terms of iterated integrals \cite{Chen}.
The letters in the iterated integrals are from two alphabets with fourth and sixth roots of unity, respectively. We analyze the motivic Galois structure of Laporta's result.

Classical Galois theory associates a finite group to a polynomial $p\in\ZZ[x]$. The elements of the Galois group act as permutation on the roots of $p$ while maintaining the structure
of the field obtained by adding the roots of $p$ to $\QQ$. Because roots of polynomials in one variable are points, the classical Galois group acts on zero-dimensional
objects. In the second half of the 20th century A. Grothendieck envisioned a generalization of Galois theory to higher dimensional varieties. While the idea was (and still is) developed by
many outstanding mathematicians it turned out that dualizing the concept of a group action is both easier to handle and more powerful.

The natural concept of dualizing varieties is the integral. In the algebraic context---which is necessary for a Galois theory to exist---the domain of integration is a variety
(in general with boundary) over $\QQ$, i.e.\ it is given by polynomial inequalities with integer coefficients. Likewise the integrand needs to be a rational function with integer coefficients.
(One may replace the integers by algebraic numbers in $\CC$ without changing the concept.) The variety associated to the differential form is the zero locus of the denominator.
In the case that the result is a mere number it is called a 'period' in \cite{KZ}. Periods form a $\QQ$ algebra which is denoted by $\PP$. In general,
one also wants to allow the polynomials to have variables which are not integrated. In this case the result will be a function on these variables. Here, however,
we can restrict ourselves to periods.

Clearly, the integral dualizes these varieties: It is a map from the domain and the differential form into the complex numbers. By dualizing the Galois action becomes a coaction,
\begin{equation}\label{co}
\Delta\colon \PP \longrightarrow \PP^\dR \otimes_{\QQ} \PP.
\end{equation}
The dual of the Galois group is the Hopf algebra $\PP^\dR$ of deRham periods. One may equivalently swap the sides of the tensor product and co-act with $\PP^\dR$
to the right.

It should be mentioned that strictly speaking the Galois coaction
is an object in algebraic geometry. It needs a sophisticated mathematical theory (the theory of 'motives'). One main conjecture in the theory is that it works identically with
the mere numbers in $\PP$. The conjecture can only fail if there exist some unknown identities in $\PP$ which have no algebraic origin. Still, any formula we obtain
from using the Galois coaction would be valid in $\PP$ because all algebraic relations in the motive are true in $\PP$. Here, we do not make the distinction between numbers and their
motivic counterparts. One other subtlety is that there exist different versions of the Galois coaction. Here we use the simplest one, the unipotent coaction. In the context
of polylogarithms (the 'mixed Tate' case) the unipotent coaction is equivalent to the full coaction.

The concept of classical Galois conjugates is best translated into $\QQ$ vector spaces. In our notation with the left coaction we have periods on the right hand side of the tensor product.
Without giving a precise definition we consider the $\QQ$ vector space of theses periods as Galois conjugates. The simplest cases are
$$
\Delta 1=1^\dR\otimes1,\quad\text{and}\quad\Delta \pi=1^\dR\otimes\pi.
$$
So, the vector space of Galois conjugates of 1 is $\QQ$ while $\QQ\pi$ is conjugate to $\pi$.\footnote{
The idea that multiples of $\pi$ are Galois conjugates of $\pi$ can be motivated by an analogy to classical Galois theory: Multiples of $\pi$ are zeros of a power series
with rational coefficients: $\prod_{n=1}^\infty(1-(\frac{x}{n\pi})^2)=\frac{\sin(x)}{x}\in \QQ[[x]]$. Strictly speaking, zero is not a Galois conjugate of $\pi$.} From
$$
\Delta\log(a)=\log(a)^\dR\otimes1+1^\dR\otimes\log(a),\quad 1\neq a\in\QQ
$$
we obtain that a logarithm has a two-dimensional vector space of Galois conjugates: $\QQ+\QQ\log(a)$.
One finds more explicit examples e.g.\ in references \cite{BrownDecom,coact} where the term coaction is often used without referring to Galois.

A difficulty handling polylogarithmic numbers is that they obey many relations over $\QQ$ (i.e.\ with rational coefficients).
In the case of multiple zeta values (MZVs) it is conjectured that these relations are exhausted by the 'generalized double shuffle relations'. Using these relations
is still difficult at high weights because it is impossible to---a priory---give a sequence of equations that leads to the reduction of a specific MZV to a $\QQ$ basis
(see e.g.\ \cite{Abl,datamine} for this approach). For other types of polylogarithmic numbers there do not even conjecturally exist complete sets of equations.

Thankfully another deep mathematical input resolves this problem. For many polylogarithmic numbers there exist an (at least) conjectural isomorphism into an algebra
where the only relations are shuffles. This shuffle algebra is composed of words in letters of certain weights. For MZVs one has one letter for each odd weight $\geq3$. It is easy to check that
words in these letters span weight spaces of the desired (conjectural) dimensions. Another benefit of this so called '$f$ alphabet' is that the coaction becomes simple deconcatenation,
$$
\Delta w=\sum_{w=uv}u\otimes v.
$$
This, in particular, makes it trivial to re-construct the original word from a known term in the coaction.
Moreover, it is clear that the Galois conjugates of a word $w$ are spanned by its right factors ($v$ if $w=uv$). The general strategy to handle polylogarithmic numbers it to
convert them into an $f$ alphabet. The method for this conversion is the decomposition algorithm by F. Brown \cite{BrownDecom}.

The only grain of salt is that the conversion is non-canonical in the sense that it depends on a choice of an algebra basis for the numbers considered. In many cases such a basis is not
even conjecturally known. For all (known) polylogarithmic numbers in quantum field theory (QFT), however, we have such bases and the conversion is possible.
In general, also non polylogarithmic numbers have $f$ alphabets (see e.g.\ \cite{BrownMotPer}).
Note that the vector spaces of Galois conjugates are canonical although the conversion into the $f$ alphabet is not.

In this note we convert the polylogarithmic part of the QED contribution to $g-2$ into the $f$ alphabet. This conversion simplifies the result. The $\QQ$ vector spaces of Galois conjugates
happen to be of intriguingly small dimensions.

It is proved for generic kinematics in \cite{Brownamplitudes,Browncoaction} that the $\QQ$ vector spaces of Galois conjugates of fixed weights are finite dimensional in QFT. It is conjectured and mostly proved that at weight $n$ all Galois conjugates come from sub-quotient graphs with at most $2n+1$ edges. (Note that the Hodge weight used in \cite{Brownamplitudes,Browncoaction}
is twice the transcendental weight for MZVs.) Our data suggest that the space of weight $n$ Galois conjugates is exhausted by the $n$-loop contribution
to the electron $g-2$. The small dimensions of the $\QQ$ vector spaces are further evidence for a coaction structure in QED similar to the one conjectured in $\phi^4$
theory \cite{coact}.

In this note we mostly process a result by S. Laporta. We try to keep this note short to emphasize that the results should be mostly attributed to S. Laporta.
\vskip1ex

\noindent
For the extension MZV(4) of MZVs by fourth roots of unity (see Section \ref{sect2} for details) P. Deligne gave a basis in 2010 \cite{Deligne}. Its restriction to Lyndon words
provides an algebra basis of MZV(4). In this note we conveniently use a `parity' version of Deligne's basis \cite{coact}.
In its $f$ alphabet one has one letter at each positive integer weight (see Table \ref{fs}),
\begin{equation}
f_n^4\cong\left\{\begin{array}{ll}
2\mathrm{Re}\,\mathrm{Li}_n(\ii),&\text{if $n$ is odd,}\\
2\ii\mathrm{Im}\,\mathrm{Li}_n(\ii),&\text{if $n$ is even.}\end{array}\right.
\end{equation}
We use the sign $\cong$ to emphasize that, strictly speaking, the right hand side is not equal to the left hand side. It is obtained by applying the composition of a non-canonical
(motivic $f$ alphabet) isomorphism with the conjectural period isomorphism to the left hand side.

In the case of sixth roots of unity P. Deligne only gave a basis in \cite{Deligne} for the subspace of MZV(6) that is constructed by using the three letters $0,1,\xi_6$, where
$$
\xi_N=\exp(2\pi\ii/N)
$$
is a primitive $N$th root of unity. For the conversion of the fourth order contribution to $g-2$ this is not sufficient.
In general, we need the full alphabet of $N+1$ letters (for $N=6$)
\begin{equation}
X_N=\{0,\;\xi_N^k, k=0,\ldots,N-1\}.
\end{equation}
In 2015, D. Broadhurst defined a set of iterated integrals with words in $X_6$ (see Definition \ref{basis6}) and conjectured that this set provides an algebra basis for MZV(6) \cite{DB6}.
For mere numbers it seems impossible to prove Broadhurst's conjecture. However, in the (conjecturally equivalent) motivic setup one can use the Galois coaction
to give a (rather simple) proof.

\begin{thm}\label{mainthm}
The motivic Broadhurst set of iterated integrals in sixth roots of unity is an algebra basis for motivic MZV(6).
\end{thm}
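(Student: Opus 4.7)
The plan is to reduce Broadhurst's conjecture to a weight-by-weight linear-algebra statement, exploiting the graded Hopf algebra structure of motivic $\mathrm{MZV}(6)$. Motivic $\mathrm{MZV}(6)$ is the algebra of functions on the pro-unipotent motivic Galois group of mixed Tate motives over $\mathrm{Spec}\,\ZZ[\zeta_6,1/6]$, hence a graded connected free commutative $\QQ$-Hopf algebra $H$ whose weight dimensions are known \emph{a priori}. For any such $H$, a set of homogeneous elements is a polynomial algebra basis if and only if its image in $H_+/H_+^2$ is a graded vector-space basis. The theorem therefore splits into two claims: (a) the Broadhurst elements have the correct count in each weight, equal to $\dim_{\QQ}(H_+/H_+^2)$ in that weight, and (b) they are linearly independent modulo decomposables.

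Claim (a) is a finite enumeration exercise: the Broadhurst elements are indexed by Lyndon words in $X_6$ subject to an explicit restriction, and one matches their Witt/necklace count against the generating series of indecomposables of $H$, which is known from the description of the motivic fundamental Lie algebra at $\mu_6$.

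For claim (b) I would proceed by induction on the weight $n$, using the infinitesimal coaction operators
$$
D_r \colon H_n \longrightarrow \bigl(H^\dR_r / (H^\dR_+)^2\bigr) \otimes H_{n-r}, \qquad 1 \le r \le n-1,
$$
obtained from Goncharov's coproduct on motivic iterated integrals by projecting the left tensor factor onto indecomposables. The induction hypothesis provides a basis for each $H_{n-r}$ in terms of products of Broadhurst elements, and $D_r$ is computed explicitly as a sum over contiguous length-$r$ sub-words of the iterated integral. Claim (b) then becomes the statement that a specific, combinatorially defined matrix of deconcatenation multiplicities has maximal rank.

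The main obstacle is precisely to verify this maximal-rank statement. In the pure $\mathrm{MZV}$ case Brown's argument is particularly clean because every weight carries at most one indecomposable generator, so the coaction is essentially scalar-valued in the left factor and Lyndon triangularity in a single-letter alphabet at each step suffices. For the full alphabet $X_6$ there are several indecomposables per weight, the coaction takes values in a non-trivial tensor product on both sides, and one must prove a block-triangularity statement. I would establish this by ordering Broadhurst's Lyndon words lexicographically and combining Radford's theorem on Lyndon words in the shuffle algebra with the explicit Deligne-type presentation of the motivic Galois Lie algebra at $\mu_6$; the required triangularity then reduces to non-degeneracy of the pairings between Lyndon Broadhurst words and their duals.
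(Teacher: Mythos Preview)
Your overall architecture matches the paper's: reduce to a graded linear-algebra problem, count generators via a Lyndon bijection, and prove invertibility of the coaction matrix by a triangularity argument using the derivations coming from Goncharov--Brown. The counting part (your claim (a)) is essentially the paper's Lemma~\ref{oneone}, which exhibits an explicit order-preserving bijection $\psi_0$ between Broadhurst's Lyndon words in $\{0,\xi_3^2,\xi_2\}$ (not all of $X_6$) with no $0\xi_2$ subsequence and Lyndon words in the $f$ alphabet.

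The genuine gap is in claim (b), specifically your final paragraph. When you apply $D_r$ (equivalently the paper's $\delta_x$) to $I(1,w,0)$ via the Goncharov formula, you get a sum over \emph{all} contiguous length-$r$ subwords, not just the leftmost one. The leftmost cut gives the expected leading term $\psi_0(w)$, but the inner cuts produce extra terms that are \emph{not} a priori lower in your lexicographic order, and Radford's theorem by itself says nothing about them. Your appeal to ``non-degeneracy of the pairings between Lyndon Broadhurst words and their duals'' is a restatement of what must be proved, not an argument for it. This is exactly the place where the $N=6$ case is harder than Brown's $N=1$ proof, as you yourself note.

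The paper resolves this with a specific arithmetic trick you are missing: one works modulo~$3$. Proposition~\ref{prop2} shows that every depth-one iterated integral $I(b,w,a)$ with $a,b\in\{0,\xi_3^2,\xi_2\}$ lies in $3\QQ\cdot\mathrm{Li}_n(\xi_3)+\QQ(2\pi\ii)^n$; this is a short computation from the distribution relations in Proposition~\ref{prop1}. Consequently (Lemma~\ref{lem1}) all inner depth-one contributions to $\delta_x$ vanish modulo lower depth and modulo~$3$, leaving only the leftmost cut. The matrix of $\psi'_n$ then becomes unitriangular over $\ZZ/3\ZZ$ after filtering by depth and ordering lexicographically within each depth, and invertibility over $\QQ$ follows. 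Without this mod~$3$ reduction (or an equivalent device controlling the inner terms) your triangularity claim is unsupported.
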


The $f$ alphabet for MZV(6) has two generators for weight one and one generator for every weight greater than one.
Here, we again use a parity version of Broadhurst's basis. The letters are
\begin{eqnarray}\label{fn}
f_n^6&\cong&\left\{\begin{array}{ll}
-2\mathrm{Re}\,\mathrm{Li}_n(\xi_3),&\text{if $n$ is odd}\\
-2\ii\mathrm{Im}\,\mathrm{Li}_n(\xi_3),&\text{if $n$ is even}\end{array}\right.\quad n\geq1\text{, and}\\
g^6_1&\cong&2\log2.
\end{eqnarray}

Note that in Laporta's result the two alphabets MZV(4) and MZV(6) do not mix. This makes it possible to avoid the use of twelfth roots of unity.
Also note that (see Prop.\ \ref{prop1})

\begin{equation}
g^6_1\cong-2f^4_1\quad\text{and}\quad\frac{3^{n-1}}{3^{n-1}-1}f^6_n\cong-\frac{4^{n-1}}{2^{n-1}-1}f^4_n\quad\text{for odd $n\geq3$.}
\end{equation}
Although these letters are related they must be distinguished because they belong to different alphabets. However, it is the benefit of the parity basis
that words in $\{f^4_1,f^4_3,f^4_5,\ldots\}$ and $\{g^6_1,f^6_3,f^6_5,\ldots\}$ can (non-trivially) be converted into each other. We have, e.g., the identity
\begin{equation}
g^6_1f^6_3\cong\frac{256}{27}f^4_1f^4_3-\frac{139}{28350}\pi^4.
\end{equation}

The result for the QED contribution $a_e$ to the electron $g-2$ is
\begin{eqnarray}\label{ae}
a_e&\cong&\frac{1}{2}\left(\frac{\alpha}{\pi}\right)\nonumber\\
&&+\,\left(\frac{197}{144}+\frac{1}{12}\pi^2+\frac{27}{32}f^6_3-\frac{1}{4}g^6_1\pi^2\right)\left(\frac{\alpha}{\pi}\right)^2\nonumber\\
&&+\,\left(\frac{28259}{5184}+\frac{17101}{810}\pi^2+\frac{139}{16}f^6_3-\frac{149}{9}g^6_1\pi^2-\frac{525}{32}g^6_1f^6_3+\frac{1969}{8640}\pi^4-\frac{1161}{128}f^6_5\right.\nonumber\\
&&\qquad+\,\left.\frac{83}{64}f^6_3\pi^2\right)\left(\frac{\alpha}{\pi}\right)^3\nonumber\\
&&+\,\left(\frac{1243127611}{130636800}+\frac{30180451}{155520}\pi^2-\frac{255842141}{2419200}f^6_3-\frac{8873}{36}g^6_1\pi^2+\frac{126909}{2560}\frac{f^6_4}{\ii\sqrt{3}}\right.\nonumber\\
&&\qquad-\,\frac{84679}{1280}g^6_1f^6_3+\frac{169703}{3840}\frac{f^6_2\pi^2}{\ii\sqrt{3}}+\frac{779}{108}g^6_1g^6_1\pi^2+\frac{112537679}{3110400}\pi^4-\frac{2284263}{25600}f^6_5\nonumber\\
&&\qquad+\,\frac{8449}{96}g^6_1g^6_1f^6_3-\frac{12720907}{345600}f^6_3\pi^2-\frac{231919}{97200}g^6_1\pi^4+\frac{150371}{256}\frac{f^6_6}{\ii\sqrt{3}}+\frac{313131}{1280}g^6_1f^6_5\nonumber\\
&&\qquad-\,\frac{121383}{1280}f^6_2f^6_4-\frac{14662107}{51200}f^6_3f^6_3
+\frac{8645}{128}\frac{f^6_2g^6_1f^6_3}{\ii\sqrt{3}}-\frac{231}{4}g^6_1g^6_1g^6_1f^6_3-\frac{16025}{48}\frac{f^6_4\pi^2}{\ii\sqrt{3}}\nonumber\\
&&\qquad+\,\frac{4403}{384}g^6_1f^6_3\pi^2-\frac{136781}{1920}f^6_2f^6_2\pi^2+\frac{7069}{75}f^4_2f^4_2\pi^2-\frac{1061123}{14400}f^6_3g^6_1\pi^2\nonumber\\
&&\qquad+\,\frac{1115}{72}\frac{f^6_2g^6_1g^6_1\pi^2}{\ii\sqrt{3}}+\frac{781181}{20736}\frac{f^6_2\pi^4}{\ii\sqrt{3}}-\frac{4049}{1080}g^6_1g^6_1\pi^4+\frac{90514741}{54432000}\pi^6\nonumber\\
&&\qquad-\,\frac{95624828289}{2050048}f^6_7-\frac{29295}{512}g^6_1f^6_2f^6_4+\frac{107919}{512}g^6_1f^6_3f^6_3+\frac{337365}{256}f^6_3g^6_1f^6_3\nonumber\\
&&\qquad-\,\frac{55618247}{409600}f^6_5\pi^2-\frac{1055}{256}g^6_1f^6_2f^6_2\pi^2+\frac{26}{3}f^4_1f^4_2f^4_2\pi^2+\frac{553}{4}g^6_1f^6_3g^6_1\pi^2\nonumber\\
&&\qquad-\,\frac{35189}{1024}f^6_3g^6_1g^6_1\pi^2+\frac{79147091}{2211840}f^6_3\pi^4-\frac{3678803}{4354560}g^6_1\pi^6\nonumber\\
&&\qquad+\,\sqrt{3}(E_{4a}+E_{5a}+E_{6a}+E_{7a})+E_{6b}+E_{7b}+U\Bigg)\left(\frac{\alpha}{\pi}\right)^4.
\end{eqnarray}

\begin{center}
\begin{tabular}{l|l|llllllll}
wt.&dim.&words\\\hline
0&1&1\\
1&0&\text{---}\\
2&1&$\pi^2$\\
3&2&$f^6_3$&$g^6_1\pi^2$\\
4&6&$f^6_4$&$g^6_1f^6_3$&$f^6_2\pi^2$&$f^4_2\pi^2$&$g^6_1g^6_1\pi^2$&$\pi^4$\\\hline
5&4&$f^6_5$&$g^6_1g^6_1f^6_3$&$f^6_3\pi^2$&$g^6_1\pi^4$\\
6&15&$f^6_6$&$g^6_1f^6_5$&$f^6_2f^6_4$&$f^6_3f^6_3$&$f^6_2g^6_1f^6_3$&$g^6_1g^6_1g^6_1f^6_3$&$f^6_4\pi^2$\\
&&$g^6_1f^6_3\pi^2$&$f^6_2f^6_2\pi^2$&$f^4_2f^4_2\pi^2$&$f^6_3g^6_1\pi^2$&$f^6_2g^6_1g^6_1\pi^2$&$f^6_2\pi^4$&$g^6_1g^6_1\pi^4$&$\pi^6$\\
7&11&$f^6_7$&$g^6_1f^6_2f^6_4$&$g^6_1f^6_3f^6_3$&$f^6_3g^6_1f^6_3$&$f^6_5\pi^2$&$g^6_1f^6_3g^6_1\pi^2$&$g^6_1f^6_2f^6_2\pi^2$\\
&&$f^4_1f^4_2f^4_2\pi^2$&$f^6_3g^6_1g^6_1\pi^2$&$f^6_3\pi^4$&$g^6_1\pi^6$
\end{tabular}
\end{center}
\captionof{table}{A basis of the Galois conjugates in (\ref{ae}). In general, higher loop orders in $a_e$ will provide new Galois conjugates which are not listed in the table.
However, we expect that all new Galois conjugates have at least weight five. So, up to weight four the above table should be comprehensive.}
\label{words}
\vskip1ex

All calculations were done using the Maple package {\tt HyperlogProcedures} \cite{hyperlog_procedures}. Galois conjugates are listed in Table \ref{words}.

\begin{remark}
\begin{enumerate}
\item In (\ref{ae}) the letters $f^6_1\cong\log3$ and $f^4_4$ are absent in all words. In particular the absence of the weight one letter $f^6_1$ drastically reduces
the number of allowed words. Note that the absence of $f^6_1$ as a letter is much stronger than the absence of $f^6_1$ as a Galois conjugate ($g^6_1$ is present as a letter but not as a
Galois conjugate).

The absence of the letter $\log3$ in the parity Broadhurst basis of MZV(6) can be given the following description:
An iterated integral $I(1,w,0)$ with a word $w$ in the three letters $0,-1,\xi_3^2$ (or $\xi_3$) is free of the letter $\log3$ if and only if $w$ is of the following type:
(a) $w$ has no letter $\xi_3^2$ or (b) $w$ has a single letter $\xi_3^2$ and no letter $-1$ and weight $\geq2$ or (c) $w$ does neither begin or end in $\xi_3^2$ and the deletion of
all letters 0 in $w$ leads to one of the three structures $\xi_3^2-1\ldots-1$, $-1\ldots-1\xi_3^2$, or $\xi_3^2-1\ldots-1\xi_3^2$. In particular, $w$ can have at most two letters $\xi_3^2$.
Regretfully, these words are not stable under shuffle nor does the restriction of the Broadhurst basis to $\log3$ free elements give an algebra basis for $\log3$ free MZV(6).
It would be desirable to get a more natural description of $\log3$ free MZV(6).
\item There are no odd powers of $\pi$ in (\ref{ae}).
\item At odd weights all Galois conjugates have an odd number of letters (odd coradical depth, see Section \ref{sect2}).
\item The structure of numbers in (\ref{ae}) (i.e.\ MZV($N$) for $N=1,2,4,6$) equals the structure of numbers found (or expected) in primitive graphs of massless $\phi^4$ theory.
However, in massless $\phi^4$ theory these numbers are found at much higher loop orders (loop orders $3,9,8,7$ for $N=1,2,4,6$, respectively).

Note that in $\phi^4$ theory the loop order for first Euler sums, $N=2$, is artificially high by a subtle mechanism connected to the coaction conjectures \cite{coact}.
Naively one would expect Euler sums at loop order 7 in $\phi^4$).

The connection between massive and massless QFTs may hint to the existence of a universal structure which severely restricts the type
of numbers in any QFT calculation (see \cite{mod} for a comparison between $\phi^4$ and unphysical non-$\phi^4$ numbers). By time of writing there exists nothing concrete
about this potential structure (other than that it should be related to the Galois coaction).
\item The coaction principle works graph by graph, so it is not surprising to see it in a partial result. It may be possible, however, that the picture changes slightly
when a full result is known.
\item Even weight letters in $f^6$ need to have a factor of $1/\sqrt3\ii$ because the letters are imaginary but $a_e$ is real. Because all integrals of $a_e$
are defined over $\QQ$ imaginary numbers come as $\sqrt{-3}$ in MZV(6).
\item The elliptic and the unknown contributions $E$ and $U$ also have Galois conjugates.
\item We expect that the $\QQ$ vector space of Galois conjugates of weight $n$ is completely determined by loop orders $\leq n$. In particular, we expect that in Table \ref{words} the list
of polylogarithmic (mixed Tate) Galois conjugates up to weight four is complete to all loop orders in $a_e$.
\item The rich structure in (\ref{ae}) supports the validity of Laporta's result.
\end{enumerate}
\end{remark}

\subsection*{Acknowledgements}
The author is grateful for very helpful discussions with F. Brown, C. Glanois, and E. Panzer on motivic MZVs. The author is supported by the DFG grant SCHN 1240/2-2.

\section{Proof of theorem \ref{mainthm}}\label{sect2}
\subsection{Iterated integrals}
Iterated integrals were defined on manifolds by Chen \cite{Chen}. Here, we only fix the notation in the elementary case of a punctured sphere. Let $a_1,\ldots,a_n,z\in\CC$, then
the iterated integral $I(z,a_n\ldots,a_1,0)$ is recursively defined by $I(z,0)=1$ and 
\begin{eqnarray}
\frac{\partial}{\partial z}I(z,a_na_{n-1}\ldots a_1,0)&=&\frac{1}{z-a_n}I(z,a_{n-1}\ldots a_1,0),\nonumber\\
I(0,a_na_{n-1}\ldots a_1,0)&=&0\quad\text{if }n\geq1.
\end{eqnarray}
Iterated integrals with fixed endpoints $0,z$ form a shuffle algebra in the words $a_n\ldots a_1$,
\begin{equation}\label{shuffle}
I(z,u,0)I(z,v,0)=I(z,u\sha v,0),
\end{equation}
where iterated integrals are extended to sums of words by linearity.

Iterated integrals converge (are regular) if and only if $a_1\neq0$ and $a_n\neq z$. The definition of iterated integrals can be extended to all words by the following procedure:
For $z\neq0$ we set
\begin{equation}\label{reg}
I(z,0^{\{n\}},0)=(-1)^nI(z,z^{\{n\}},0)=\log(|z|)^n/n!,
\end{equation}
where  $a^{\{n\}}$ is a sequence of $n$ letters $a$.
The absolute value in the logarithm originates from a tangential base point regularization prescription applied to straight paths.
We obtain the following 'un-shuffle' formula:
\begin{lem}\label{unshufflelemma}
For any letters $a,b$ and any word $w$ we have
\begin{equation}\label{unshuffle}
I(z,wba^{\{n\}},0)=\sum_{i=0}^n(-1)^iI(z,a^{\{n-i\}},0)I(z,[w\sha a^{\{i\}}]b,0).
\end{equation}
\end{lem}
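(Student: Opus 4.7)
The plan is to start from the right-hand side of (\ref{unshuffle}), apply the shuffle relation (\ref{shuffle}) to turn each product of iterated integrals into a single iterated integral of a shuffled word, and then collapse the resulting double sum via a standard binomial identity. The proof reduces to one combinatorial computation about shuffles of a word ending in $b$ with a power of a single letter~$a$.

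The combinatorial input I would establish first is that, for any word $u$,
\begin{equation*}
(ub)\sha a^{\{m\}}=\sum_{k=0}^{m}[u\sha a^{\{m-k\}}]\,b\,a^{\{k\}},
\end{equation*}
obtained by classifying the interleavings of $ub$ with $a^{\{m\}}$ according to the number $k$ of copies of $a$ that end up to the right of $b$. Specializing this to $u=w\sha a^{\{i\}}$ and using associativity of the shuffle together with the elementary identity $a^{\{p\}}\sha a^{\{q\}}=\binom{p+q}{p}a^{\{p+q\}}$ yields
\begin{equation*}
[(w\sha a^{\{i\}})b]\sha a^{\{n-i\}}=\sum_{k=0}^{n-i}\binom{n-k}{i}\,[w\sha a^{\{n-k\}}]\,b\,a^{\{k\}}.
\end{equation*}

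Next I would substitute this into the right-hand side of (\ref{unshuffle}). Applying (\ref{shuffle}) to each product $I(z,a^{\{n-i\}},0)\,I(z,[w\sha a^{\{i\}}]b,0)$ and exchanging the order of summation, the right-hand side becomes
\begin{equation*}
\sum_{k=0}^{n}\Bigl[\sum_{i=0}^{n-k}(-1)^i\binom{n-k}{i}\Bigr]\,I(z,[w\sha a^{\{n-k\}}]\,b\,a^{\{k\}},0).
\end{equation*}
The inner alternating sum is $(1-1)^{n-k}$ and therefore vanishes unless $k=n$, in which case it equals $1$. The sole surviving term is $I(z,[w\sha a^{\{0\}}]\,b\,a^{\{n\}},0)=I(z,wba^{\{n\}},0)$, which is the left-hand side.

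The main care needed is with the case $a\in\{0,z\}$, where the iterated integrals diverge in the naive sense and must be interpreted via the tangential base point regularization of (\ref{reg}). The shuffle identity (\ref{shuffle}) is known to extend to these regularized integrals, so the argument above applies verbatim; this extension is the only non-combinatorial ingredient. An alternative approach would be induction on $n$, isolating $I(z,wba^{\{n\}},0)$ as the $k=n$ term in the expansion of $I(z,wb,0)\,I(z,a^{\{n\}},0)=I(z,wb\sha a^{\{n\}},0)$ and applying the inductive hypothesis to the remaining terms, but the direct route above is shorter.
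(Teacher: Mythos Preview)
Your argument is correct. The combinatorial identity for $(ub)\sha a^{\{m\}}$, its specialization via $a^{\{p\}}\sha a^{\{q\}}=\binom{p+q}{p}a^{\{p+q\}}$, the exchange of summations, and the binomial collapse $(1-1)^{n-k}$ are all sound, and your remark that the regularized shuffle relation (\ref{shuffle}) covers the singular endpoints is exactly the point that makes the computation apply in the generality claimed.

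The paper, however, takes the route you mention only as an alternative: it proves the lemma by induction on $n$, using the single-step shuffle identity
\[
I(z,a,0)\,I(z,wba^{\{n-1\}},0)=n\,I(z,wba^{\{n\}},0)+I(z,[w\sha a]ba^{\{n-1\}},0),
\]
which comes from inserting one copy of $a$ into $wba^{\{n-1\}}$ and separating the $n$ insertions to the right of $b$ from those that shuffle into $w$. Your direct approach expands everything at once and then collapses via the alternating binomial sum; the paper's inductive approach peels off one letter $a$ at a time. The inductive proof is terser to state (one line in the paper), while your direct argument makes the underlying cancellation mechanism completely explicit and avoids having to track how the inductive hypothesis for the word $w$ interacts with the hypothesis for the word $w\sha a$. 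Either way the content is the same shuffle combinatorics.
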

\begin{proof}
Straight forward induction over $n$ using the identity
\begin{equation}
I(z,a,0)I(z,wba^{\{n-1\}},0)=nI(z,wba^{\{n\}},0)+I(z,[w\sha a]ba^{\{n-1\}},0).
\end{equation}
\end{proof}
We use (\ref{unshuffle}) and the analogous formula for reversed words to express non-convergent (singular) iterated integrals as polynomials in $\log(|z|)$ with coefficients
which are sums of regular iterated integrals. If $|z|=1$---which we have in the context of this article---only the constant term $i=n$ survives.
In the proof of Theorem \ref{mainthm} we will need the fact that this constant term is an integer linear combination of regular iterated integrals.

For completeness we give a formula for the simultaneous left-right un-shuffle. Let $a\neq b$, then
$$
I(z,b^{\{m\}}wa^{\{n\}},0)=\sum_{i=0}^n\sum_{j=0}^m(-1)^{i+j}I(z,b^{\{m-j\}},0)I(z,a^{\{n-i\}},0)I(z,[b^{\{j\}}\sha w\sha a^{\{i\}}]_{\rm reg},0),
$$
where the subscript 'reg' means that we keep only those words that neither begin in $a$ nor end in $b$. The proof is by induction over $m$. We do not need this statement here and hence
leave the details to the reader.

It is important to notice that the extension to singular iterated integrals is consistent with the shuffle product (\ref{shuffle}).
A more detailed account on iterated integrals in this specific context can e.g.\ be found in \cite{coact,gf} and the references therein.

The $\QQ$ algebra $I$ of (motivic) iterated integrals has a coaction. The unipotent part of this coaction can be considered as a map
\begin{equation}
\Delta I\longrightarrow (I\mod2\pi\ii)\otimes I.
\end{equation}
An explicit formula for $\Delta$ co-acting on $I$ was given by A. Goncharov and F. Brown \cite{Gon}, \cite{MMZ}, \cite{P3P}.

\subsection{Extensions of multiple zeta values}
By geometric expansion a regular iterated integral can be expressed as a sum. For $a_i\in\CC\setminus\{0\}$, $i=1,\ldots,r$ we have
\begin{equation}\label{sum}
(-1)^rI(z,0^{\{n_r-1\}}a_r\ldots0^{\{n_1-1\}}a_1,0)=\sum_{k_r>\ldots>k_1\geq1}\frac{(\frac{z}{a_r})^{k_r}\ldots(\frac{a_2}{a_1})^{k_1}}{n_r^{k_r}\ldots n_1^{k_1}}
=\mathrm{Li}_{n_r,\ldots,n_1}\left(\frac{z}{a_r},\ldots,\frac{a_2}{a_1}\right).
\end{equation}
With (\ref{unshuffle}) singular iterated integrals are polynomials in $\log(|z|)$ with coefficients given by sums of Lis.

In the special case $z=1$ and $a_i\in X_N\setminus\{0\}$, $i=1,\ldots,r$, we obtain extensions MZV($N$) of MZVs by $N$th roots of unity.

The motivic structure of MZV($N$) is conjecturally understood if $N$ is not a prime power $p^n$ for $p\geq5$ \cite{Zhao,DG}. In these cases there exists
an (in most cases conjectural) isomorphism $\psi$ into an $f$ alphabet where multiplication is shuffle and the coaction is deconcatenation.
The isomorphism $\psi$ is non-canonical; it depends on a choice of an algebra basis of MZV($N$).

\begin{center}
\begin{tabular}{l|lll}
$N$&weight 1&even weight&odd weight $\geq3$\\\hline
1&0&0&1\\
2&1&0&1\\
$\neq 1,2,p^n$ for $p\geq5$&$\phi(N)/2+\nu(N)-1$&$\phi(N)/2$&$\phi(N)/2$
\end{tabular}
\end{center}
\captionof{table}{Number of letters in the $f$ alphabet of motivic MZV($N$), where $\phi$ is Euler's totient function and $\nu(N)$ is the number of distinct primes in $N$.
The last line in this table is mostly conjectural.}
\label{fs}
\vskip1ex

The number of letters in the $f$ alphabet depends on $N$ (see Table \ref{fs}). The case $N=1$ was solved by F. Brown \cite{MMZ} (with a contribution by D. Zagier
\cite{Zagier23}). In the case $N=2,3,4,8$ a vector space basis of MZV($N$) was given by P. Deligne in \cite{Deligne}. This basis can trivially be transformed into an
algebra basis by restriction to Lyndon words. In this section we address the case $N=6$.

In \cite{Deligne} P. Deligne also considers the case $N=6$. However, for $N=6$ he restricts the alphabet in the iterated integrals to the three letters $0,1,\xi_6$.
In 2015, D. Broadhurst provided a conjectural algebra basis for the full MZV(6) (Definition \ref{basis6}) \cite{DB6} (see also \cite{HSS6}).
Theorem \ref{mainthm} proves that this basis is an algebra basis for motivic MZV(6). For other $N$ some partial results were obtained by C. Glanois (private communication).

\subsection{Sixth roots of unity and the Broadhurst conjecture}
In \cite{DB6} D. Broadhurst defines the following sets of numbers:
\begin{defn}\label{basis6}
Consider the three numbers $0,\xi_3^2=(-1-\sqrt3\ii)/2,\xi_2=-1$ as letters and order them by $0\prec\xi_3^2\prec\xi_2$. Then
\begin{eqnarray}
B^6_1&=&\{2\pi\ii,I(1,\xi_3^2,0),I(1,\xi_2,0)\},\nonumber\\
B^6_n&=&\{I(1,w,0),w\hbox{ Lyndon word in }0,\xi_3^2,\xi_2\hbox{ with no $0\xi_2$ subsequence}\}.
\end{eqnarray}
\end{defn}

\begin{ex}\label{ex1}
\begin{eqnarray}
B^6_2&=&\{I(1,0\xi_3^2,0),I(1,\xi_3^2\xi_2,0)\}\nonumber\\
B^6_3&=&\{I(1,00\xi_3^2,0),I(1,0\xi_3^2\xi_3^2,0),I(1,0\xi_3^2\xi_2,0),I(1,\xi_3^2\xi_3^2\xi_2,0),I(1,\xi_3^2\xi_2\xi_2,0)\}.
\end{eqnarray}
\end{ex}

\begin{remark}
Because MZV(6) is stable under complex conjugation one could equally well use the letter $\xi_3$ instead of $\xi_3^2$.
For the parity version of the Broadhurst basis (which we use for the QED $g-2$) complex conjugation merely flips the signs in all words with odd parity \cite{coact}.
In the following we decided to choose the letter $\xi_3^2$ for the following three (minor) reasons: (1) Broadhurst uses the differential form $-\dd(1-\xi_3 x)=-\dd x/(x-\xi_3^2)$
in his original work \cite{DB6}. This corresponds to the letter $\xi_3^2$. (2) With the letter $\xi_3^2$ words with no letter $-1$ are iterated integrals in $0,1$ with endpoint $\xi_3$
which is also used in Deligne's basis for MZV(3). (3) The author's Maple package \cite{hyperlog_procedures} uses the letter $\xi_3^2$.
\end{remark}

\begin{con}[D. Broadhurst, 2015]
The Broadhurst set $B^6=\{B^6_n,n\geq1\}$ is an algebra basis for MZV(6).
\end{con}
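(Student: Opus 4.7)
The plan is to show $B^6$ is a free algebraic generating set of motivic MZV$(6)$ by combining a dimension count with an inductive linear-independence argument using the motivic coaction.

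First I would invoke the Deligne--Goncharov determination of the motivic fundamental group of $\PP^1\setminus(\{0,\infty\}\cup\mu_6)$, which presents motivic MZV$(6)$ as a free (shuffle) algebra whose letter count in each weight is that of Table~\ref{fs}: $\phi(6)/2+\nu(6)-1=2$ letters in weight $1$ and $\phi(6)/2=1$ letter in each weight $\geq2$, with $2\pi\ii$ contributed as a separate generator at weight $1$. An algebra basis at weight $n$ is then the set of Lyndon words of total weight $n$ in these letters. Hence the theorem reduces to (a) the combinatorial match of $|B^6_n|$ with the number of such Lyndon words, and (b) the algebraic independence of $B^6_n$ modulo the subalgebra generated by $B^6_{<n}$.

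For step (a) I would use the transfer-matrix count of length-$n$ words in $\{0,\xi_3^2,-1\}$ avoiding the factor $0(-1)$, extract the Lyndon count via the necklace/M\"obius formula adapted to the forbidden factor, and verify weight by weight that it equals the Lyndon count in the $f$ alphabet. For step (b) I would apply the Goncharov--Brown coaction formula to $I(1,w,0)$ for each Broadhurst Lyndon word $w$. Since in the $f$ alphabet the coaction is deconcatenation, the associated-graded coaction on $I(1,w,0)$ reads off the ``sub-Lyndon'' decomposition of $w$; the Lyndon property then provides a triangular pairing against a chosen family of de Rham test elements. Induction on weight, fed by (a), forces the weight-$n$ Broadhurst elements to be linearly independent modulo products and, together with the dimension match, to form a complete algebra basis.

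The main obstacle, I expect, is the interplay of the no-$0(-1)$ restriction with both steps. Combinatorially, one must verify that excising exactly the Lyndon words containing the factor $0(-1)$ reproduces the Deligne--Goncharov generator count; motivically, one must arrange the de Rham test elements in step (b) so that the resulting pairing matrix is triangular despite the broken alphabet symmetry. Once the combinatorial match of (a) is in place, the ``no spurious relations'' direction is automatic from the freeness of motivic MZV$(6)$, so the technical core of the proof concentrates on the Lyndon word count together with the explicit triangularity verification in the coaction step.
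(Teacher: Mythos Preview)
Your overall architecture matches the paper's: reduce to a dimension count plus a linear-independence argument via the motivic coaction, working modulo $2\pi\ii$ and modulo products. (Note that what you are actually proving is the motivic statement, Theorem~\ref{mainthm}; the Conjecture as stated, for the honest numbers, is not proved in the paper either.) For step~(a), the paper replaces your transfer-matrix/M\"obius count by an explicit bijection $\psi_0$: the no-$0\xi_2$ condition forces every maximal block of zeros to be followed by $\xi_3^2$, so sending $0^{n-1}\xi_3^2\mapsto f_n$ and $\xi_2\mapsto g_1$ is well-defined, order-preserving, and hence carries Lyndon words to Lyndon words (Lemma~\ref{oneone}). This is cleaner than a necklace count and yields the dimension match directly.

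The genuine gap is in step~(b). Saying ``the Lyndon property then provides a triangular pairing'' and ``one must arrange the de Rham test elements so that the resulting pairing matrix is triangular'' locates the difficulty but does not resolve it. When you apply the derivations $\delta_x$ to $I(1,w,0)$ via the Goncharov--Brown formula~(\ref{delta}), \emph{interior} depth-one subwords of $w$ contribute, not only the leftmost block, and these interior contributions do not vanish outright. The paper's decisive device is to reduce \emph{modulo~$3$}: by Proposition~\ref{prop2}, every depth-one iterated integral $I(b,u,a)$ with $a,b\in\{0,\xi_3^2,\xi_2\}$ lies in $3\,\QQ\cdot\mathrm{Li}_n(\xi_3)+\QQ(2\pi\ii)^n$, so all interior coaction terms vanish modulo $2\pi\ii$, modulo lower depth, and modulo~$3$ (Lemma~\ref{lem1}). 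What survives is exactly the leftmost cut, reproducing $\psi_0(w)$, and the resulting matrix is unitriangular over $\ZZ/3\ZZ$, hence invertible over $\QQ$. This arithmetic input traces back to the identities $\zeta(n)\equiv 0$ and $\mathrm{Li}_n(\xi_6)\equiv 0$ modulo~$3$ of Proposition~\ref{prop1}. Your proposal has no analogue of this mechanism to kill the interior terms, so the triangularity you rely on remains unjustified; this is precisely the ``broken alphabet symmetry'' obstacle you flag, and it needs the mod-$3$ argument (or something of equal strength) to be overcome.
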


Because transcendentality statements are notoriously hard, it seems impossible to prove the above conjecture. However, in the conjecturally equivalent motivic setup
where the coaction is available the conjecture can be proved (this is Theorem \ref{mainthm}).

For the proof of Theorem \ref{mainthm} we need the following propositions:
\begin{prop}\label{prop1}
Let $\zeta$ be the Riemann zeta function. We have
\begin{eqnarray}\label{prop}
\zeta(n)&\in&\frac{(1+(-1)^{n-1})3^{n-1}}{1-3^{n-1}}\mathrm{Li}_n(\xi_3)+\QQ(2\pi\ii)^n\quad\text{if }n\in\{2,3,4,\ldots\},\nonumber\\
\mathrm{Li}_n(\xi_6)&\in&(2^{1-n}+(-1)^n)\mathrm{Li}_n(\xi_3)+\QQ(2\pi\ii)^n\quad\text{if }n\in\{1,2,3,4,\ldots\}.
\end{eqnarray}
\end{prop}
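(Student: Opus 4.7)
The plan is to reduce both statements to the polylogarithm distribution relation combined with the inversion formula. The distribution relation
\[
\sum_{j=0}^{m-1}\mathrm{Li}_n(\xi_m^j z) \;=\; m^{1-n}\mathrm{Li}_n(z^m)
\]
is immediate from the series $\mathrm{Li}_n(w)=\sum_{k\geq1}w^k/k^n$ together with $\sum_{j=0}^{m-1}\xi_m^{jk}=m$ if $m\mid k$ and $0$ otherwise. The inversion formula, in the form
\[
\mathrm{Li}_n(z)+(-1)^n\mathrm{Li}_n(1/z)=-\frac{(2\pi\ii)^n}{n!}\,B_n\!\left(\tfrac{1}{2}+\tfrac{\log(-z)}{2\pi\ii}\right),
\]
shows that whenever $\log(-z)$ is a rational multiple of $2\pi\ii$ (which holds for any root of unity, since for $z=\xi_3$ one has $-z=e^{-\ii\pi/3}$ and for $z=\xi_6$ one has $-z=e^{-2\pi\ii/3}$), the right-hand side lies in $\QQ(2\pi\ii)^n$. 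In particular,
\[
\mathrm{Li}_n(\xi_3^2)=\mathrm{Li}_n(\xi_3^{-1})\equiv(-1)^{n-1}\mathrm{Li}_n(\xi_3)\pmod{\QQ(2\pi\ii)^n}.
\]

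For the first statement I would apply the distribution relation with $m=3$, $z=1$ to obtain
\[
\mathrm{Li}_n(\xi_3)+\mathrm{Li}_n(\xi_3^2)=(3^{1-n}-1)\zeta(n),
\]
and substitute the inversion identity above to get
\[
\bigl(1+(-1)^{n-1}\bigr)\mathrm{Li}_n(\xi_3)\equiv(3^{1-n}-1)\zeta(n)\pmod{\QQ(2\pi\ii)^n}.
\]
Solving for $\zeta(n)$ and multiplying numerator and denominator by $3^{n-1}$ gives precisely the stated coefficient $(1+(-1)^{n-1})3^{n-1}/(1-3^{n-1})$. Note that for even $n$ both sides of the congruence vanish, consistent with the classical fact $\zeta(n)\in\QQ(2\pi\ii)^n$.

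For the second statement I would use $\xi_6=-\xi_3^2$ and apply the distribution relation with $m=2$, $z=\xi_3^2$:
\[
\mathrm{Li}_n(\xi_3^2)+\mathrm{Li}_n(\xi_6)=2^{1-n}\mathrm{Li}_n(\xi_3^4)=2^{1-n}\mathrm{Li}_n(\xi_3).
\]
Substituting $\mathrm{Li}_n(\xi_3^2)\equiv(-1)^{n-1}\mathrm{Li}_n(\xi_3)$ yields
\[
\mathrm{Li}_n(\xi_6)\equiv\bigl(2^{1-n}+(-1)^n\bigr)\mathrm{Li}_n(\xi_3)\pmod{\QQ(2\pi\ii)^n},
\]
which is the claim. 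The main thing requiring care is bookkeeping the $\log(-z)$ values to ensure the Bernoulli polynomial contributions do indeed land in $\QQ(2\pi\ii)^n$; for $z\in\{\xi_3,\xi_3^2,\xi_6\}$ they do, so no additional terms leak out. There is no genuine obstacle beyond these checks, and the proof transfers verbatim to the motivic setting since both the distribution and inversion relations are consequences of functoriality of the iterated integrals in Definition~\ref{basis6}.
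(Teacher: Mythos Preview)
Your argument is correct: the distribution relation at $m=3$ (resp.\ $m=2$) together with the inversion/complex-conjugation identity $\mathrm{Li}_n(\xi_3^{-1})\equiv(-1)^{n-1}\mathrm{Li}_n(\xi_3)$ modulo $\QQ(2\pi\ii)^n$ yields both formulae, and your bookkeeping of the Bernoulli-polynomial term is sound for all the roots of unity involved, including the borderline case $n=1$ in the second line.

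The paper itself does not spell any of this out; it simply invokes Lemma~3.1 and Lemma~3.2 of \cite{coact}. From the way Lemma~3.1 of \cite{coact} is used later in the paper (``complex conjugation on $\mathrm{Li}_n(\xi_6)$ gives a factor of $(-1)^{n-1}$ modulo $2\pi\ii$''), those cited lemmas are precisely the parity/inversion statement and the distribution relation you employ. So your proof is not a genuinely different route but rather an explicit unpacking of the citation; what you gain is a self-contained argument, while the paper keeps the exposition short by deferring to \cite{coact}.
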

\begin{proof}
This follows from Lemma 3.1 and Lemma 3.2 in \cite{coact}.
\end{proof}
The depth of a word is the number of its non-zero letters.
\begin{prop}\label{prop2}
Let $w$ be a word of weight $n$ and depth one and let $a,b$ letters, all in the alphabet $0,\xi_3^2,\xi_2$. Then
\begin{equation}\label{zero}
I(b,w,a)\in c\mathrm{Li}_n(\xi_3)+\QQ(2\pi\ii)^n,\quad\text{with }\QQ\ni c\equiv0\mod3.
\end{equation}
\end{prop}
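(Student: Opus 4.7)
The plan is to reduce $I(b,w,a)$ via path composition to iterated integrals with right endpoint $0$, regularize them with Lemma~\ref{unshufflelemma} to land on classical polylogarithms, and finally invoke Proposition~\ref{prop1} (together with the Jonquiere inversion for $\mathrm{Li}_n(\xi_3^2)$) to write the result as $c\,\mathrm{Li}_n(\xi_3)+\QQ(2\pi\ii)^n$; the substance of the proposition is then an arithmetic check on $c$. For the first step I use the path-composition identity
$$I(b,w,a)=\sum_{w=uv}I(b,u,0)\,I(0,v,a).$$
Since $|a|=|b|=1$, the regularization~(\ref{reg}) gives $I(z,0^{\{k\}},0)=0$ for $z\in\{-1,\xi_3^2\}$ and $k\geq 1$. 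Because $w=0^i\alpha 0^j$ carries a unique non-zero letter $\alpha$, each split $w=uv$ forces one of $u,v$ to be a pure string of zeros, so only $u=\emptyset$ or $v=\emptyset$ contributes; combined with the reversal formula $I(0,v,a)=(-1)^{|v|}I(a,\tilde v,0)$ this collapses the sum to
$$I(b,w,a)=I(b,w,0)+(-1)^n\,I(a,\tilde w,0),$$
which also covers $a=0$ or $b=0$ (one term then vanishes).

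Next I apply Lemma~\ref{unshufflelemma} to each $I(z,0^i\alpha 0^j,0)$ with $z\in\{-1,\xi_3^2\}$, stripping off the trailing $0^j$; only the summand in which $I(z,0^{\{j-k\}},0)$ does not vanish (namely $k=j$) survives, giving
$$I(z,0^i\alpha 0^j,0)=(-1)^j\binom{n-1}{i}I(z,0^{n-1}\alpha,0)=-(-1)^j\binom{n-1}{i}\mathrm{Li}_n(z/\alpha).$$
Hence $I(b,w,a)$ is an \emph{integer} linear combination of numbers $\mathrm{Li}_n(z/\alpha)$ with $z,\alpha\in\{-1,\xi_3^2\}$, so $z/\alpha\in\{1,\xi_6,\xi_6^5\}$.

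Proposition~\ref{prop1} directly handles $\mathrm{Li}_n(1)=\zeta(n)$ and $\mathrm{Li}_n(\xi_6)$; for $\mathrm{Li}_n(\xi_6^5)=\overline{\mathrm{Li}_n(\xi_6)}$, complex conjugation combined with the depth-one inversion $\mathrm{Li}_n(\xi_3^2)\in(-1)^{n+1}\mathrm{Li}_n(\xi_3)+\QQ(2\pi\ii)^n$ rewrites it as $(-1)^{n+1}$ times the coefficient attached to $\mathrm{Li}_n(\xi_6)$, modulo $\QQ(2\pi\ii)^n$. The arithmetic heart of the argument is to check that each of these polylog coefficients has positive $3$-adic valuation: for $\zeta(n)$ (nontrivially only for odd $n\geq 3$) it is $\frac{2\cdot 3^{n-1}}{1-3^{n-1}}$, whose numerator carries the full power of $3$; for $\mathrm{Li}_n(\xi_6)$ it is $2^{1-n}+(-1)^n=\frac{1+(-1)^n2^{n-1}}{2^{n-1}}$, and using $2\equiv-1\pmod 3$ the numerator satisfies $1+(-1)^n(-1)^{n-1}\equiv 0\pmod 3$ while the denominator is coprime to $3$. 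Since $\ZZ$-linear combinations of rationals of positive $3$-adic valuation remain of positive $3$-adic valuation, the total coefficient $c$ of $\mathrm{Li}_n(\xi_3)$ is itself $\equiv 0\pmod 3$.

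The step demanding the most care is the first: path composition and reversal must be known to commute with the regularization conventions in use. In the motivic framework underlying Theorem~\ref{mainthm} both identities belong to the Hopf-algebra structure on iterated integrals and hold without qualification, so after reducing to right endpoint $0$ the proof amounts to routine binomial bookkeeping plus the two mod $3$ congruences above.
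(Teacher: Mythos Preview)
Your proof is correct and follows essentially the same route as the paper's: both reduce $I(b,w,a)$ via path concatenation and reversal to $I(b,w,0)+(-1)^nI(a,\tilde w,0)$ (the mixed product terms vanishing because one factor has depth zero), un-shuffle trailing zeros with Lemma~\ref{unshufflelemma} to land on $\mathrm{Li}_n$ at $1,\xi_6,\xi_6^5$, and then invoke Proposition~\ref{prop1} together with complex conjugation. The only cosmetic differences are that you reverse the order of the steps and spell out the $3$-adic check on the coefficients $\tfrac{2\cdot 3^{n-1}}{1-3^{n-1}}$ and $2^{1-n}+(-1)^n$ explicitly, whereas the paper absorbs this into the reference to Proposition~\ref{prop1}.
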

\begin{proof}
We write $I(b,w,a)\equiv0\mod2\pi\ii\mod3$ for (\ref{zero}). From (\ref{prop}) we have
$$
\zeta(n)\equiv0\mod2\pi\ii\mod3\quad\text{for }n\geq2,\quad\quad\mathrm{Li}_n(\xi_6)\equiv0\mod2\pi\ii\mod3\quad\text{for }n\geq1.
$$
By Lemma 3.1 in \cite{coact} complex conjugation on $\mathrm{Li}_n(\xi_6)$ gives a factor of $(-1)^{n-1}$ modulo $2\pi\ii$. Therefore we also have
$\mathrm{Li}_n(\xi_6^5)\equiv0\mod2\pi\ii\mod3\text{ for }n\geq1$.

If $w$ does not end in 0 we have four types of iterated integrals that begin in 0:
\begin{equation}\label{n1}
I(\xi_2,0^{\{n-1\}}\xi_2,0)=I(\xi_3^2,0^{\{n-1\}}\xi_3^2,0)=\left\{\begin{array}{cl}-\zeta(n)&\text{if }n>1\\
0&\text{if }n=1 \text{ (by Eq.\ (\ref{reg}))}\end{array}\right.
\end{equation}
and
\begin{equation}\label{n2}
I(\xi_2,0^{\{n-1\}}\xi_3^2,0)=-\mathrm{Li}_n(\xi_6^5),\quad I(\xi_3^2,0^{\{n-1\}}\xi_2,0)=-\mathrm{Li}_n(\xi_6).
\end{equation}
These iterated integrals are 0 modulo $2\pi\ii$ and modulo 3. Because $\log(|b|)=0$ Lemma \ref{unshufflelemma} extends this to all iterated integrals $I(b,w,0)$.
By path concatenation and path reversal for iterated integrals (see e.g.\ \cite{gf}) we have
\begin{equation}\label{pathcon}
I(b,w,a)\equiv I(b,w,0)+(-1)^nI(a,\widetilde{w},0)\mod\text{products,}
\end{equation}
where $\widetilde{w}$ is $w$ in reversed order. The words in the missing product terms have total depth one, implying that one factor has depth zero.
By path concatenation, path reversal, and (\ref{reg}) this factor is zero. So, we can use (\ref{pathcon}) to complete the proof of the proposition.
\end{proof}

\subsection{Proof of Theorem \ref{mainthm}}
We use the ordered letters $g_1\cong\log2\succ f_1\cong-\mathrm{Li}_1(\xi_3)\succ f_2\cong-\mathrm{Li}_2(\xi_3)\succ f_3\cong-\mathrm{Li}_3(\xi_3)\succ\ldots$
for the $f$ alphabet with respect to the Broadhurst set $B^6$. Let
\begin{equation}
{\mathcal U^6}:=\langle\hbox{words in }g_1,f_n\rangle_\QQ.
\end{equation}
We endow ${\mathcal U^6}$ with the shuffle product and the deconcatenation coproduct. This makes ${\mathcal U^6}$ a Hopf algebra.
The general theory on MZV(6) provides the existence of an isomorphism \cite{BrownMotPer}
\begin{equation}\label{iso}
\psi\,:\,\text{MZV}(6)\longrightarrow {\mathcal U^6}\otimes_\QQ\QQ[2\pi\ii].
\end{equation}
The Hopf algebra ${\mathcal U^6}$ co-acts on  ${\mathcal U^6}\otimes_\QQ\QQ[2\pi\ii]$ by deconcatenation on the left hand side of the tensor product.
The isomorphism $\psi$ can be constructed with the decomposition algorithm by F. Brown \cite{BrownDecom} (see also Example 2.1 in \cite{coact}).

The decomposition algorithm inductively cuts off left letters in ${\mathcal U^6}$ by applying derivatives $\delta_{g_1}$ and $\delta_{f_m}$, $m=1,2,3,\ldots$.
More precisely, the derivatives $\delta_x$ first map into ${\mathcal U^6}$, then cut off the left letter $x$ in ${\mathcal U^6}$, and finally map back into MZV(6).
In this construction words in ${\mathcal U^6}$ which do not begin in $x$ are nullified. Schematically,
$$
\delta_x:\text{MZV(6)}\longrightarrow\text{MZV(6)},\quad\delta_x=\psi^{-1}\circ\delta^{\mathcal U^6}_x\circ\psi.
$$
If the letter $x$ has weight $m$ then one can reduce the calculation of $\delta_x$ on any
iterated integral with weight $n>m$ to the case $n=m$ (see Eq.\ (2.13) in \cite{coact}):
\begin{multline}\label{delta}
\delta_x I(a_{n+1},a_n\ldots a_1,a_0)=\\
\sum_{k=0}^{n-m}(\delta_x I(a_{k+m+1},a_{k+m}\ldots a_{k+1},a_k))I(a_{n+1},a_n\ldots a_{k+m+1}a_k\ldots a_1,a_0).
\end{multline}
Note that $\delta_x I(a_{k+m+1},a_{k+m}\ldots a_{k+1},a_k)\in\QQ$ is a number (because the weight of the iterated integral matches the weight of the letter $x$).
By definition $\delta_x I=0$ if $I$ has weight less than $m$.

\begin{lem}\label{lem1}
Let $v$ be a word in the alphabet $X_6$. Then
\begin{eqnarray}\label{dmod3}
\delta_{f_m}I(1,0^{\{n-1\}}\xi_3^2v,0)&\equiv&\left\{\begin{array}{cl}I(1,v,0)&\text{if }n=m\\
0&\text{if }m<n\\
\ast&\text{if }m>n\end{array}\right.\hspace{.8ex}\mod\text{lower depth}\mod3,\nonumber\\
\delta_{g_1}I(1,av,0)&\equiv&\left\{\begin{array}{cl}I(1,v,0)&\text{if }a=\xi_2\\
0&\text{otherwise}\end{array}\right.\mod\text{lower depth}\mod3.
\end{eqnarray}
The above equations mean that we first nullify all expressions which have lower depth than the words in the iterated integral of the left hand sides.
Then we reduce the coefficients of all iterated integrals in (\ref{delta}) modulo 3 to obtain the right hand sides. The $\ast$ in the first equation merely
means that the reduction modulo 3 exists (no coefficient has a factor of 3 in the denominator).
\end{lem}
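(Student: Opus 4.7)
The plan is to expand each side of (\ref{dmod3}) using formula (\ref{delta}) and to classify the resulting terms by the depth of the cut-out sub-word, showing that a single ``leading-block'' cut provides the main term while every other contribution either vanishes modulo $3$ or produces an iterated integral of strictly lower depth.

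For the first equation I write $\delta_{f_m}I(1,0^{\{n-1\}}\xi_3^2v,0)=\sum_k\delta_{f_m}I(\ell_k,w_k,r_k)\cdot I_k^{\mathrm{rem}}$ from (\ref{delta}), with $w_k$ the contiguous length-$m$ sub-word cut at position $k$ and $\ell_k,r_k$ its flanking letters, each in $X_6\cup\{1\}$. When $w_k=0^{\{m\}}$ has depth zero, the factor $I(\ell_k,0^{\{m\}},r_k)$ expands by (\ref{reg}) and path concatenation at $0$ into a polynomial in $\log|\ell_k|$ and $\log|r_k|$, which vanishes because $X_6\setminus\{0\}$ lies on the unit circle. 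When the depth of $w_k$ is at least $2$, $I_k^{\mathrm{rem}}$ has depth at most $\mathrm{depth}(v)-1$ and is nullified as ``lower depth''. The terms left to analyse are the depth-one cuts $w_k=0^{\{s\}}c\,0^{\{t\}}$ with $s+1+t=m$, where $c$ is either the leading $\xi_3^2$ or a non-zero letter of $v$.

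For a depth-one cut, path reversal together with Lemma \ref{unshufflelemma} and $|\ell_k|=|r_k|=1$ reduce $I(\ell_k,w_k,r_k)$ to $-\mathrm{Li}_m(\ell_k/c)+\mathrm{Li}_m(r_k/c)$ modulo products of lower depth. Proposition \ref{prop1} writes each $\mathrm{Li}_m(x)$ for $x\in X_6\setminus\{0\}$ as $c_x\mathrm{Li}_m(\xi_3)$ modulo $\QQ(2\pi\ii)^m$, and the identity $2+(-2)^m\equiv 0\pmod{3}$ together with the factor $3^{m-1}$ in the formula for $\zeta(m)$ forces $c_x\equiv 0\pmod{3}$ for every such $x$ except $x=\xi_3$ and $x=\xi_3^2$, uniformly extending the content of Proposition \ref{prop2}. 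Therefore $\delta_{f_m}I(\ell_k,w_k,r_k)\equiv 0\pmod{3}$ unless some flanking ratio $\ell_k/c$ or $r_k/c$ lies in $\{\xi_3,\xi_3^2\}$, which can happen only at the far-left boundary $\ell_k=1$ and forces $c=\xi_3^2$, $s=n-1$, $t=0$, and therefore $m=n$. For this unique main cut the prefactor is $\equiv 1\pmod{3}$ and $I_k^{\mathrm{rem}}=I(1,v,0)$, producing the first case of (\ref{dmod3}). When $m<n$ the leading block is unreachable in a single length-$m$ cut and the sum vanishes mod $3$; when $m>n$ the cut engulfs additional letters of $v$ and yields the undetermined but mod-$3$ well-defined residue $\ast$.

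The $\delta_{g_1}$ statement is the $m=1$ specialisation: cuts are single letters, the leading cut at position $|v|$ produces $\delta_{g_1}I(1,a,r)\cdot I(1,v,0)$, and since $g_1\cong\log 2=I(1,\xi_2,0)$ the prefactor is $\equiv 1\pmod{3}$ precisely when $a=\xi_2$ and $\equiv 0$ otherwise. All non-leading single-letter cuts give $\ell_k\neq 1$ and are killed by the same divisibility mechanism.

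The main obstacle is the mod-$3$ divisibility of the depth-one prefactors across all configurations of flanking letters in $X_6$. Proposition \ref{prop2} settles the case $\ell_k,r_k\in\{0,\xi_3^2,\xi_2\}$; for arbitrary $X_6$-endpoints one rechecks, residue by residue $\xi_6^k$, that the corresponding coefficient $c_x$ is divisible by $3$ via the same parity argument $2+(-2)^m\equiv 0\pmod{3}$ underlying Proposition \ref{prop1}. This case-by-case enumeration, combined with the observation that the boundary value $1$ can only be attained once the full leading block is swallowed, is the decisive step.
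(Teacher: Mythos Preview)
Your approach mirrors the paper's proof closely: expand via (\ref{delta}), discard depth-zero cuts (vanishing logs on the unit circle) and depth-$\geq 2$ cuts (lower depth), and for depth-one cuts invoke the mod-$3$ divisibility of Proposition~\ref{prop2} to kill every contribution except the leftmost cut $k=N-m$ with $\ell_k=1$. The paper packages the depth-one analysis more economically by citing (\ref{psizero}) directly rather than unfolding into explicit $\mathrm{Li}_m(\ell_k/c)$ terms, but the substance is the same.

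Two caveats. First, your claim that a ratio $\ell_k/c\in\{\xi_3,\xi_3^2\}$ ``can happen only at the far-left boundary $\ell_k=1$'' holds when all letters lie in the Broadhurst alphabet $\{0,\xi_3^2,\xi_2\}$, but fails for arbitrary $X_6$-words (e.g.\ $\ell_k=\xi_6$, $c=\xi_6^5$ gives ratio $\xi_3$). The paper's own proof, resting on Proposition~\ref{prop2}, is likewise confined to endpoints in $\{0,\xi_3^2,\xi_2\}$---which is all that Theorem~\ref{mainthm} needs---so your argument is adequate for the intended application even though the general-$X_6$ extension you sketch in the final paragraph does not go through as written. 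Second, for $\delta_{g_1}$ the non-leading single-letter cuts vanish not by ``the same divisibility mechanism'' on the $f_m$-coefficient, but because (in the Broadhurst alphabet) these weight-one quantities lie in $\QQ(2\pi\ii)$ and hence carry no $g_1$-component at all; the paper makes this distinction explicit when treating the case $a=\xi_3^2$.
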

\begin{proof}
By definition of the basis $B^6$ we have (in (\ref{fn}) we use the parity basis)
\begin{equation}\label{psii}
\psi(I(1,0^{\{n-1\}}\xi_3^2,0))=-\psi(\mathrm{Li}_n(\xi_3))=f_n.
\end{equation}
Applying this to (\ref{zero}) in Prop.\ \ref{prop2} we get
\begin{equation}\label{psizero}
\psi(I(b,w,a))\equiv0\mod2\pi\ii\mod3
\end{equation}
for any depth one word $w$ and letters $a,b$ in the alphabet $0,\xi_3^2,\xi_2$.

We use (\ref{delta}) on $B^6$. The words in the iterated integrals on the right hand side give a partition of the word $a_n\ldots a_1$ on the left hand side.
If the word $a_{k+m}\ldots a_{k+1}$ has depth two or more then the right hand side vanishes modulo lower depth.

By path concatenation, path reversal, and (\ref{reg}), iterated integrals of depth zero words vanish for endpoints $0,\xi_3^2,\xi_2,1$.
So, we may assume that $a_{k+m}\ldots a_{k+1}$ has depth one.

By (\ref{psizero}) only the term $k=n-m$ contributes modulo $2\pi\ii$ and modulo 3. By (\ref{pathcon}) and (\ref{zero}) we have ($a_{n+1}=1$)
\begin{equation}\label{II}
I(1,a_n\ldots a_{n-m+1},a_{n-m})=I(1,a_n\ldots a_{n-m+1},0)\mod\text{products}\mod2\pi\ii\mod3.
\end{equation}
As in the proof of Prop.\ \ref{prop2} the product terms vanish.

Equations (\ref{delta}) and (\ref{psii}) give the non-zero terms on the right hand sides of (\ref{dmod3}).

If $m<n$ in the first equation or $a=0$ in the second equation of (\ref{dmod3}) then the word $a_n\ldots a_{n-m+1}$ has depth zero.
The corresponding iterated integral vanishes.

If $m>n$ in the first equation then the only possible depth one sequence for $a_n\ldots a_{n-m+1}$ is $u=0^{\{n-1\}}\xi_3^20^{\{m-n\}}$. The corresponding iterated integral
$I(1,u,0)$ in (\ref{II}) un-shuffles over $\ZZ$ (by (\ref{unshuffle})) to $I(1,0^{\{m-1\}}\xi_3^2,0)$ which is mapped to $f_m$ (by (\ref{psii})).
The integer coefficient has a reduction modulo 3.

If $a=\xi_3^2$ in the second equation of (\ref{dmod3}) then $m=1$ and $I(a_{n+1},a_n,0)=I(1,\xi_3^2,0)$. By definition $\psi(I(1,\xi_3^2,0))=f_1$ with zero $g_1$ component.
Hence $\delta_{g_1}(I(1,\xi_3^2,0))=0$ and (\ref{dmod3}) is proved.
\end{proof}

We inductively define a `naive' map $\psi_0$ from words in the alphabet $X_6$ into ${\mathcal U^6}$ which preserves {\em no} structure of MZV(6) but which may be considered as
a conductor for the construction of $\psi$. For words $u,v$ in the alphabet $X_6$ let
\begin{eqnarray}
\psi_0(u\xi_2v)&=&\psi_0(u)g_1\psi_0(v),\nonumber\\
\psi_0(u0^{\{n-1\}}\xi_3^2v)&=&\psi_0(u)f_n\psi_0(v)\quad\text{if the rightmost letter of $u$ is not 0}.
\end{eqnarray}
In the order of Example \ref{ex1} we have $\psi_0($words in $B^6_2)=\{f_2,f_1g_1\}$ and $\psi_0($words in $B^6_3)=\{f_3,f_2f_1,f_2g_1,f_1f_1g_1,f_1g_1g_1\}$.

\begin{lem}\label{oneone}
The map $\psi_0$ is one to one between Lyndon words in $B^6\setminus\{2\pi\ii\}$ and Lyndon words in $f,g$.
\end{lem}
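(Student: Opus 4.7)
The idea is to view $\psi_0$ as a prefix-code translation. Let $B := \{\xi_2\} \cup \{0^{k-1}\xi_3^2 : k \geq 1\}$ be the set of \emph{blocks} in the alphabet $\{0, \xi_3^2, \xi_2\}$, and identify $\xi_2 \leftrightarrow g_1$ together with $0^{k-1}\xi_3^2 \leftrightarrow f_k$. A word in $\{0, \xi_3^2, \xi_2\}$ factors uniquely as a concatenation of blocks provided it has no $0\xi_2$ subsequence and does not end in $0$; the map $\psi_0$ is precisely this factorization followed by the block-to-letter renaming. I first note that Lyndon words in $B^6 \setminus \{2\pi\ii\}$ of length $\geq 2$ automatically satisfy both conditions: a Lyndon word of length $\geq 2$ begins with its minimum letter and ends with a letter strictly greater than its first, hence never in $0$. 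The two length-$1$ Lyndon words $\xi_3^2, \xi_2$ in $B^6_1$ map to the length-$1$ Lyndon letters $f_1, g_1$. Injectivity of $\psi_0$ is automatic from the uniqueness of the block factorization.

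Next I verify that the induced lex order on $B$ coming from $0 \prec \xi_3^2 \prec \xi_2$ matches the target order $\cdots \prec f_3 \prec f_2 \prec f_1 \prec g_1$: the block $\xi_2$ is larger than every $0^{k-1}\xi_3^2$, and for $k < k'$ the blocks $0^{k-1}\xi_3^2$ and $0^{k'-1}\xi_3^2$ first differ at position $k$, where the longer one carries $0$ while the shorter carries $\xi_3^2$, so $0^{k-1}\xi_3^2 \succ 0^{k'-1}\xi_3^2$. Moreover $B$ is a prefix code (no block is a prefix of another), so any lex comparison between two block concatenations reduces to a block-by-block comparison under the induced order. For block-aligned proper suffixes $V = \beta_i\cdots\beta_r$ of $W = \beta_1\cdots\beta_r$, this immediately gives that $W$ Lyndon in $\{0,\xi_3^2,\xi_2\}^*$ implies $w = \psi_0(W)$ is Lyndon in the $f,g$ alphabet.

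The main obstacle is the converse direction, which requires controlling the \emph{non-aligned} proper suffixes of $W$. A non-aligned suffix begins inside some block $\beta_i = 0^{k-1}\xi_3^2$ with $k \geq 2$, and since every proper suffix of $0^{k-1}\xi_3^2$ is itself a block $0^{\ell-1}\xi_3^2$ for $1 \leq \ell < k$, such a suffix of $W$ takes the form $V' = 0^{\ell-1}\xi_3^2 \beta_{i+1}\cdots\beta_r$. Assuming $w = \alpha_1\cdots\alpha_r$ is Lyndon, the inequality $w \prec \alpha_i\alpha_{i+1}\cdots\alpha_r$ forces $\alpha_1 \preceq \alpha_i = f_k$; since $g_1$ is the maximum letter of the $f,g$ alphabet, this implies $\alpha_1 = f_m$ for some $m \geq k > \ell$. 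Hence $\beta_1 = 0^{m-1}\xi_3^2$ is lex-strictly smaller than the leading block $0^{\ell-1}\xi_3^2$ of $V'$, and the prefix-code reduction gives $W \prec V'$. Together with the aligned case this establishes that $W$ is Lyndon whenever $w$ is Lyndon. Combined with injectivity and weight-preservation, $\psi_0$ restricts to the desired bijection between Lyndon words in $B^6 \setminus \{2\pi\ii\}$ and Lyndon words in $\{g_1, f_1, f_2, \ldots\}$.
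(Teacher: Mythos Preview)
Your proof is correct, and the engine driving it---that $\psi_0$ is an order-preserving prefix-code translation---is the same as in the paper. The difference is which characterisation of Lyndon words you exploit. You use the \emph{suffix} characterisation (Lyndon $\Leftrightarrow$ strictly smaller than every proper suffix), which obliges you to split into block-aligned and non-aligned suffixes and dispatch the latter by a separate first-letter argument. The paper instead uses the \emph{rotation} characterisation. This is slicker here: rotations of $\psi_0(W)$ correspond bijectively under $\psi_0^{-1}$ to exactly those rotations of $W$ that do not end in $0$, while any rotation of $W$ that does end in $0$ is automatically no smaller than a further rotation not ending in $0$ (because $0$ is the minimal letter). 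Hence the ``non-aligned'' case simply evaporates, and the lemma reduces in one stroke to the order-preservation statement. Your route is a bit longer but perhaps more transparent, since it makes the prefix-code mechanism explicit.

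One minor imprecision: in your non-aligned argument, when the misaligned suffix begins inside the \emph{first} block ($i=1$), the inequality ``$w\prec\alpha_i\cdots\alpha_r$'' you invoke is not available, since $\alpha_1\cdots\alpha_r=w$. The desired conclusion $\alpha_1\preceq\alpha_i$ is of course trivially true in that case, so nothing actually fails; but you should rephrase the justification---for instance by observing directly that the first letter of a Lyndon word is its minimal letter, which yields $\alpha_1\preceq\alpha_i$ for all $i\ge 1$ simultaneously.
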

\begin{proof}
A word $w$ is Lyndon if it is smaller than all its non-trivial rotations. Assume $w\neq 0^{\{n\}}$ is a word in $X_6$. Because 0 is the smallest letter in $X_6$,
rotating a zero $w=0v\mapsto v0$ always makes the word larger. To decide if $w$ is Lyndon it therefore suffices to only check rotations of $w$ which do not end in zero.
If $w$ is a word on which $\psi_0$ is defined (i.e.\ $w$ has no subsequence $0\xi_2$ and $w$ does not end in zero) then $\psi_0$ is defined on all its relevant rotations (not ending in zero).

It hence suffices to show that $\psi_0$ preserves the ordering on these words. We do this by induction over the number of letters in the image of $\psi_0$.
Note that the ordering of the letters in $\mathcal U^6$ is defined such that the claim holds on them. Consider two distinct words $au,bv\in \mathcal U^6$ with initial letters $a,b$.
Then $au\succ bv$ is equivalent to $a\succ b$ or $a=b$ and $u\succ v$. By induction, in both cases this is equivalent to $\psi_0^{-1}(au)\succ\psi_0^{-1}(bv)$.
\end{proof}

By Radford's theorem Lyndon words generate the shuffle algebra $\mathcal U^6$.
The above lemma implies that the number of weight $n$ elements in $B^6\setminus\{2\pi\ii\}$ matches the number of weight $n$ generators in $\mathcal U^6$.
In the isomorphism (\ref{iso}) the element $2\pi\ii\in B^6_1$ generates the polynomial algebra $\QQ[2\pi\ii]$.
So, from the existence of the isomorphism (\ref{iso}) it follows that the number of elements in $B^6_n$ matches the number of weight $n$ algebra generators in MZV(6).

With these preparations we can now prove Theorem \ref{mainthm}.
\begin{proof}[Proof of Theorem \ref{mainthm}]
The construction of $\psi$ with the decomposition algorithm can only fail if bijectivity fails at some weight $n$.
Because (\ref{iso}) simply maps $2\pi\ii$ onto $1\otimes 2\pi\ii$ it suffices to show bijectivity modulo $2\pi\ii$.
Because $\psi$ (by construction) is compatible with the $\QQ$ algebra structures it also suffices to show bijectivity modulo products.
With both reductions the isomorphism $\psi$ induces a linear map $\psi'$ from the $\QQ$ span of $B^6\setminus\{2\pi\ii\}$ into the
$\QQ$ span of Lyndon words in ${\mathcal U^6}$.

By construction the weight is preserved by $\psi$ and $\psi'$. So, we have to show that every weight $n$ part $\psi'_n$ of $\psi'$ is a (finite dimensional) bijective linear map.
(In fact with the statement after Lemma \ref{oneone} it suffices to show injectivity or surjectivity of $\psi'_n$.)
The existence of these bijective linear maps $\psi'_n$ is a general feature of the decomposition algorithm.

The coradical depth of an $x\in {\mathcal U^6}$ is the maximum number of letters in any word of $x$.
So, e.g., $2f_2+3f_1g_1$ has coradical depth 2. The decomposition algorithm uses induction over the coradical depth.
From the proof of Lemma \ref{lem1} it is clear that $\psi$ cannot increase the depth.

We use Lemma \ref{lem1} to show that
\begin{equation}\label{mod3eq}
\psi'_n(I(1,w,0))\equiv\psi_0(w)+\sum_{\text{Lyndon }v\prec\psi_0(w)}c_vv \mod\text{lower depth}\mod3
\end{equation}
for any word $w$ of weight $n$ in letters $X_6$ (with coefficients $c_v\in\ZZ/3\ZZ$).
Because the naive map $\psi_0$ preserves the depth, calculating modulo lower depth in Lemma \ref{lem1} is equivalent to calculating modulo lower coradical depth in (\ref{mod3eq}).
So, the non-zero terms on the right hand sides of (\ref{dmod3}) iteratively generate the term $\psi_0(w)$ on the right hand side of (\ref{mod3eq}).

From the existence of Eq.\ (\ref{dmod3}) we see by induction that the right hand side of (\ref{mod3eq}) can be reduced modulo 3 after the reduction modulo lower depth.

Let $y$ be the first letter of $\psi_0(w)$ and let $x\succ y$. Because $g_1$ is the largest letter we have $y=f_n$. Hence, the first letter of $w$ is not $\xi_2$.
In both cases---$x=f_m$ for $m<n$ and $x=g_1$---we have $\delta_xI(1,w,0)=0$ modulo lower depth and modulo 3 from the zero terms on the right hand sides of (\ref{dmod3}). This implies by induction that $\psi'_n(I(1,w,0))$ cannot
produce terms $\succ\psi_0(w)$ modulo lower depth and modulo 3. So, (\ref{mod3eq}) is proved.

It follows from (\ref{mod3eq}) that $\psi'_n$ is block triangular in the depth. By induction over depth it suffices to show the bijectivity of $\psi'_n$ modulo lower depth.

We sort the elements in the depth $r$ part of $B^6_n$ by $\prec$ and likewise sort the Lyndon words of coradical depth $r$ in ${\mathcal U^6}$.
In the proof of Lemma \ref{oneone} we showed that $\psi_0$ preserves the ordering. From (\ref{mod3eq}) it follows that the matrix of $\psi'_n$ modulo lower depth becomes triangular modulo 3
with respect to the sorted bases. On the diagonal all entries are 1 modulo 3. Hence $\psi'_n$ modulo lower depth is invertible modulo 3.
This implies the bijectivity of $\psi'$ and the theorem follows.
\end{proof}
Altogether, in the proof of Theorem \ref{mainthm} we have reduced modulo $2\pi\ii$, products, lower depths, and the number 3.

Note that we have also proved that the map $\psi'_n$ (and hence also the isomorphism $\psi$) preserves the depth on $B^6_n$. In this sense $B^6_n$ is a basis of minimum depth.

Deligne's results for $N=2,3,4,8$ can also be proved with the method used in this section.

\bibliographystyle{plain}
\renewcommand\refname{References}

\end{document}